\DeclareMathOperator{\tr}{Tr}
\DeclareMathOperator{\rank}{rank}
\DeclareMathOperator\supp{supp}
\DeclareMathOperator{\spann}{span}
\DeclareMathOperator{\cone}{cone}
\DeclareMathOperator{\conv}{conv}
\DeclareMathOperator{\inter}{int}
\DeclareMathOperator{\cl}{cl}
\newtheorem{lemma}{Lemma}\newtheorem{corol}{Corollary}
\begin{document}

\title{Linear semi-infinite programming approach for entanglement quantification}

\author{Thiago Mureebe Carrijo}
\email{thiagomureebe@gmail.com}
\affiliation{Instituto de Física, Universidade Federal de Goiás, 74.690-900, Goiânia,
Goiás, Brazil}

\author{Wesley Bueno Cardoso}
\affiliation{Instituto de Física, Universidade Federal de Goiás, 74.690-900, Goiânia,
Goiás, Brazil}

\author{Ardiley Torres Avelar}
\affiliation{Instituto de Física, Universidade Federal de Goiás, 74.690-900, Goiânia,
Goiás, Brazil}

\begin{abstract}
We explore the dual problem of the convex roof construction by identifying
it as a linear semi-infinite programming (LSIP) problem. Using the
LSIP theory, we show the absence of a duality gap between primal and
dual problems, even if the entanglement quantifier is not continuous,
and prove that the set of optimal solutions is non-empty and bounded.
In addition, we implement a central cutting-plane algorithm for LSIP
to quantify entanglement between three qubits. The algorithm has global
convergence property and gives lower bounds on the entanglement measure
for non-optimal feasible points. As an application, we use the algorithm
for calculating the convex roof of the three-tangle and $\pi$-tangle
measures for families of states with low and high ranks. As the $\pi$-tangle
measure quantifies the entanglement of W states, we apply the values
of the two quantifiers to distinguish between the two different types
of genuine three-qubit entanglement. 
\end{abstract}

\maketitle

\section{Introduction}

Quantum entanglement is a special type of correlation of quantum systems
with two or more parts, which admit global states that cannot be written
with products of individual states of the parts. The interest in this
phenomenon has origin in its importance in fundamental questions of
quantum mechanics including EPR paradox and nonlocality, in its relationship
with other physical phenomena such as super-radiance, superconductivity
and disordered system, and in technological applications where the
quantum entanglement is a valuable resource for many tasks in the
areas of quantum computing and quantum information \cite{horodecki2009}.

As a consequence, the production, manipulation and quantification
of entanglement are permanent topics of scientific interest. In particular,
the quantification of entanglement can be accomplished using several
different types of entanglement measures that are generally much simpler
to define for pure states than for mixed states. Fortunately, the
construction of a measure for mixed states can be done through the
convex roof of an entanglement monotone \cite{Vidal2000}. However,
the calculation of a convex roof is computationally expensive for
high rank states, with the exception of a few cases whose analytical
solution is known.

Most numerical algorithms for the convex roof calculations work to
find the optimal pure state decomposition of the input state \cite{zyczkowski1999,audenaert2001,rothlisberger2009,cao2010,rothlisberger2012}.
Although this approach can be very efficient for low rank states,
the parameter space of the optimization problem grows fast with the
rank and has a maximal number of parameters of $\sim2n^{3}$ \cite{ryu2012},
where $n$ is the dimension of the system. Also, these methods usually
lack global convergence, which means that they can guarantee only
upper bounds on the optimal value. Another method obtains a sequence
of lower bounds by solving semidefinite programming problems, but
only for measures that are polynomials of expectation values of observables
for pure states \cite{toth2015}. A promising approach, with fewer
optimization parameters, is solving the dual problem of the convex
roof optimization task \cite{brandao2005}. Following this idea,
a minimax algorithm was proposed for the dual problem, which provides
a verifiable globally optimal solution or a lower bound on the convex
roof measure \cite{ryu2012}.

The concept of genuine multipartite entanglement, which applies to
systems with three or more parts, differentiates the correlation among
all subsystems from that restricted to a proper subset of them \cite{horodecki2009,bengtsson2017}.
It is present in many quantum algorithms \cite{bruss2011,pan2017},
cryptographic protocols \cite{cabello2000,epping2017,murta2020}
and quantum phenomena \cite{greenberger1990,hauke2016,smith2016}.
As a resource, it is essential to be able to quantify it, which have
been done by quantifiers like the three-tangle \cite{coffman2000},
its generalization by means of hyperdeterminants \cite{miyake2004},
the $\pi$-tangle \cite{cheng2007} and others \cite{szalay2015}.
Analytical formulas for these measures are known only for special
families of states, which means that a numerical approach is usually
required.

Here, we explore the dual problem of the convex roof optimization
procedure, which is shown to be a linear semi-infinite programming
(LSIP). We prove some properties of the optimization problem using
the LSIP theory and describe the pseudocode of a central cutting-plane
algorithm (CCPA) adapted to solve the dual problem. To show how this
method performs in practice, we implement the algorithm in the MATLAB
language and calculate the multipartite entanglement quantifiers for
two families of three-qubit states. The selected measures are the
three-tangle and the $\pi$-tangle, both entanglement monotones that
quantify genuine tripartite entanglement. We choose a mixture of GHZ
and W states as one of the families, and the generalized Werner states,
a rank eight class of states, as the other one. Finally, we numerically
calculate the quantifiers and compare them with analytical values
available in the literature.

\section{Basic concepts}

\subsection{three-tangle}

Concurrence - an entanglement measure for the state $\rho$ of two
qubits - is defined as $\mathcal{C}(\rho)\equiv\max\{0,\lambda_{1}-\lambda_{2}-\lambda_{3}-\lambda_{4}\}$,
where $\lambda_{1},\dots,\lambda_{4}$ are the eigenvalues of $\sqrt{\sqrt{\rho}\tilde{\rho}\sqrt{\rho}}$
in decreasing order and $\tilde{\rho}\equiv(\sigma_{y}\otimes\sigma_{y})\rho^{\ast}(\sigma_{y}\otimes\sigma_{y})$,
with $\sigma_{y}$ as a Pauli spin matrix \cite{hill1997}. To make
the notation more economical, we symbolize a pure state of set of
density matrices $\Omega$ as $[\psi]\equiv|\psi\rangle\langle\psi|$,
where $|\psi\rangle$ is a normalized vector of the Hilbert space
$\mathcal{H}$ of the system. Then, for state $|\psi\rangle$ of three
qubits with partition $A(BC)$, the concurrence is defined as $\mathcal{C}_{A(BC)}([\psi])\equiv\sqrt{2(1-\tr\circ\tr_{BC}^{2}([\psi]))}$.
The three-tangle $\tau$ is then defined as $\tau([\psi])\equiv\left(\mathcal{C}_{A(BC)}^{2}-\mathcal{C}^{2}\circ\tr_{C}-\mathcal{C}^{2}\circ\tr_{B}\right)([\psi])$
\cite{coffman2000}. It is a measure of genuine three-qubit entanglement
and it is defined, for mixed states, as the convex roof $\tau$ in
relation to the set of pure states $\mathcal{E}$: 
\begin{equation}
\tau(\rho)\equiv\inf_{\{p_{k},|\psi_{k}\rangle\}}\sum_{k}p_{k}\tau([\psi_{k}]),
\end{equation}
such that $\sum_{k}p_{k}[\psi_{k}]=\rho$, where $\sum_{k}p_{k}=1$,
$p_{k}\geq0$, $[\psi_{k}]\in\mathcal{E}$ and the infimum is taken
over all possible pure state decompositions of $\rho$.

The three-tangle has analytical expressions for some families of states,
for example, the families $\rho_{p}\equiv p[GHZ]+(1-p)[W]$ \cite{lohmayer2006,eltschka2008}
and $\rho_{p}^{'}\equiv p[GHZ]+(1-p)\mathbb{1}/8$ \cite{siewert2012},
where $|GHZ\rangle\equiv(|000\rangle+|111\rangle)/\sqrt{2}$ and $|W\rangle\equiv(|001\rangle+|010\rangle+|100\rangle)/\sqrt{3}$
(namely GHZ and W states, respectively). Formulas are available in
Appendix \ref{families}.

\subsection{$\pi$-tangle}

An important quantifier of genuine three-qubit entanglement for pure
states is called $\pi$-tangle, or three-$\pi$ \cite{cheng2007}.
It is based on negativity \cite{vidal2002}, an entanglement monotone
given by $\mathcal{N}_{AB}(\rho)\equiv\|\rho^{T_{A}}\|_{1}-1$, where
$\|\ldotp\|_{1}$ is the trace norm and the multiplicative constant
``$1/2$'' has been removed. Let $|\psi\rangle\in\mathcal{H}_{ABC}$
be a state of a three-qubit system $ABC$ and $\pi_{A}(|\psi\rangle)\equiv\mathcal{N}_{A(BC)}^{2}([\psi])-\mathcal{N}_{AB}^{2}(\rho_{AB})-\mathcal{N}_{AC}^{2}(\rho_{AC})$,
where $\mathcal{N}_{A(BC)}([\psi])=\|[\psi]^{T_{A}}\|_{1}-1$, $\rho_{AB}\equiv\tr_{C}([\psi])$
and $\rho_{AC}\equiv\tr_{B}([\psi])$. Functions $\pi_{B}$ and $\pi_{C}$
are defined analogously. The $\pi$-tangle quantifier is then defined
as $\pi(|\psi\rangle)\equiv(\pi_{A}(|\psi\rangle)+\pi_{B}(|\psi\rangle)+\pi_{C}(|\psi\rangle))/3$.

It is proved in \cite{cheng2007} that $\pi$ is an entanglement
monotone and vanishes for product state vectors, qualifying it as
a measure of entanglement \cite{vedral1997}. It is an upper bound
on the three-tangle: $\pi(|psi\rangle)\geq\tau(|\psi\rangle)$, implying
that it is strictly positive for the states of the $GHZ\backslash W$
class. Moreover, it is also strictly positive for states of the $W$
class with the form $|\psi\rangle=\alpha|100\rangle+\beta|010\rangle+\gamma|001\rangle$
and, according to numerical calculations \cite{cheng2007}, this
is valid for other $W\backslash B$ class states, where $B$ is the
set of biseparable pure states.

\subsection{Convex roof duality}

Before talking about the dual problem of the convex roof procedure,
let us introduce some notation and definitions. Let $\mathbb{R}^{(\mathcal{E})}$
be the set of all functions $f:\mathcal{E}\rightarrow\mathbb{R}$
such that $\supp(f)<\infty$. This is a kind of ``generalized sequence''
space, with only finite ``sequences'' of real numbers indexed by
the set $\mathcal{E}$. The set $\mathbb{R}^{(\mathcal{E})}$ is a
vector subspace of the space of real functions with $\mathcal{E}$
as domain. Defining $E$ as a non-negative continuous entanglement
monotone for pure states, its convex roof $E^{\cup}$ is given by
the optimization problem $P$: 
\begin{align}
\min_{f\in\mathbb{R}^{(\mathcal{E})}}\hspace{0.5cm} & \sum_{[\psi]\in\mathcal{E}}f([\psi])E([\psi]),\nonumber \\
\text{subject to}\hspace{0.5cm} & \sum_{[\psi]\in\mathcal{E}}f([\psi])[\psi]=\rho,\quad f\geq0.
\end{align}
It is known that the Lagrangian dual problem of $P$ is given by $D$
\cite{brandao2005,eisert2007}: 
\begin{align}
\sup_{X\in H}\hspace{0.5cm} & -\tr(\rho X),\nonumber \\
\text{subject to}\hspace{0.5cm} & E([\psi])+\tr([\psi]X)\geq0,\hspace{0.1cm}\forall[\psi]\in\mathcal{E},
\end{align}
with $H$ as the space of $n$-dimensional Hermitian matrices. As
$D$ has linear objective function, a finite number of variables (setting
a base in $H$, we have $n^{2}$ real variables) and an infinite number
os linear inequalities, the problem is an LSIP \cite{reemtsen1998}.

\section{The LSIP approach}

\subsection{Properties of $P$ and $D$}

We are going to reformulate the problem $D$ according to the eigendecomposition
of $\rho=\sum_{k=1}^{r}\lambda_{k}[\phi_{k}]$, where $r$ is the
rank of $\rho$. If $|\phi_{1}\rangle,\ldots,|\phi_{r}\rangle$ are
orthonormal eigenvectors of $\rho$, then any other pure state decomposition
$\rho=\sum_{l}p_{l}[\psi_{l}]$ satisfies $|\psi_{l}\rangle\in\mathcal{H}_{\rho}\equiv\spann\{|\phi_{1}\rangle,\ldots,|\phi_{r}\rangle\},\forall l$,
where $\spann(S)$ is the linear span of the set $S$. Defining $\mathcal{E}_{\rho}\equiv\{[\psi]\in\mathcal{E}:|\psi\rangle\in\mathcal{H}_{\rho}\}$
and $H_{\rho}$ as the set of Hermitian operators on $\mathcal{H}_{\rho}$,
the reformulation of $D$ is then given by $D_{\rho}$: 
\begin{align}
-\inf_{X\in H_{\rho}}\hspace{0.5cm} & \tr(\rho X),\nonumber \\
\text{subject to}\hspace{0.5cm} & E([\psi])+\tr([\psi]X)\geq0,\hspace{0.1cm}\forall[\psi]\in\mathcal{E}_{\rho}.
\end{align}
The linear semi-infinite system of $D_{\rho}$ is defined as $\sigma_{\rho}\equiv\{E([\psi])+\tr([\psi]X)\geq0,\psi\in\mathcal{E}_{\rho}\}$,
where the set of solutions of $\sigma_{\rho}$ is the feasible set
$F_{\rho}$. As $\mathcal{E}_{\rho}$ is a compact metric space and
$E$ is continuous, $\sigma_{\rho}$ is a continuous system. Since
$E$ is non-negative, for any positive-definite matrix $X$, $E([\psi])+\tr([\psi]X)>0,\forall[\psi]\in\mathcal{E}_{\rho}$,
implying that Slater's condition is satisfied for $\sigma_{\rho}$.
So, $\sigma_{\rho}$ is a Farkas-Minkowski system and we conclude
that there is no duality gap between $D_{\rho}$ and $P$, which means
that the optimal values $v(D_{\rho})$ and $v(P)$ are equal \cite{reemtsen1998}.

The first-moment cone of $\sigma_{\rho}$ is given by $M_{c}\equiv\cone(\mathcal{E}_{\rho})$
\cite{goberna1998}, which is the set of all conical combinations
of elements of $\mathcal{E}_{\rho}$. As $\cone(\mathcal{E}_{\rho})$
is the cone of positive semi-definite matrices of $H_{\rho}$, we
have that $\inter(M_{c})\neq\emptyset$. As $\rho$ is a full rank
matrix of $H_{\rho}$, then $c\in\inter(M_{c})$. By Theorem 8.1 of
\cite{goberna1998}, we conclude that there exists an optimal solution
$X_{\rho}^{\ast}$ of $D_{\rho}$ and that the set of all optimal
solutions $F_{\rho}^{\ast}$ is bounded. Furthermore, by the same
theorem, we could conclude the absence of the duality gap without
making the continuity hypothesis on $E$. An example of discontinuous
entanglement monotone is the Schmidt number \cite{terhal2000}, which
can be defined for mixed states by means of the convex roof procedure
and, therefore, can be calculated by the dual problem.

Problem $D_{\rho}$ has several known optimality conditions, many
of which are described by Theorem 7.1 of \cite{goberna1998}. One
of them is the Karush--Kuhn--Tucker sufficient condition: $\rho\in A(X)$,
where $A(X)\equiv\cone(\mathcal{E}_{\rho}(X))$ is the cone of active
constraints and $\mathcal{E}_{\rho}(X)\equiv\{[\psi]\in\mathcal{E}_{\rho}:E([\psi])+\tr([\psi]X)=0\}$
is the set of active indexes. Since $\tr(\rho)=1$, this condition
can be reformulated as $\rho\in\conv(\mathcal{E}_{\rho}(X))$, where
$\conv(\mathcal{E}_{\rho}(X))$ is the convex hull of $\mathcal{E}_{\rho}(X)$,
which is the global optimality condition described in \cite{lee2012,ryu2012}.

There is a relationship between feasible and optimal points of $D_{\rho}$
and the so-called entanglement witnesses \cite{brandao2005}. An
entanglement witness $Y$ is a Hermitian operator, which is not positive
semidefinite z satisfying $\tr(\rho_{\text{sep}}Y)\geq0$ for any
separable $\rho_{\text{sep}}$ \cite{horodecki2001}. For the definition
of an optimal witness, we can use a bounded set $\mathcal{C}$ and
define $\mathcal{M}\equiv\cl(\mathcal{W}\cap\mathcal{C})$, where
$\cl(\mathcal{W}\cap\mathcal{C})$ is the closure of $\mathcal{W}\cap\mathcal{C}$
and $\mathcal{W}$ is the set of all entanglement witnesses. Then,
an entanglement witness $Y^{\ast}$ is $\rho$-optimal if $\tr(\rho Y^{\ast})=\min_{Y\in\mathcal{M}}\tr(\rho Y)$
\cite{terhal2002,brandao2005}. If $E([\psi])=0$ for any separable
state $|\psi\rangle$ (if this is not true, there exists $E_{0}\in\mathbb{R}$
such that $E+E_{0}$ is a non-negative entanglement monotone), it
can be verified than any optimal solution $X_{\rho}^{\ast}\neq0$
is a $\rho$-optimal entanglement witness (the set $\mathcal{C}$
can be any bounded set such that $F_{\rho}^{\ast}\subseteq\mathcal{C}$).
In addition, any feasible $X$ such that $\tr(\rho X)<0$ is an entanglement
witness.

\subsection{Central cutting-plane algorithm}

To numerically solve an LSIP problem, several methods are available,
mostly classified into five categories: discretization methods, local
reduction methods, exchange methods, simplex-like methods and descent
methods, ordered in decreasing order of efficiency according to reference
\cite{goberna1998}. Besides these approaches, other deterministic
types of algorithms and uncertain LSIP methods are discussed in a
recent review of the field \cite{goberna2017}. We then choose the
CCPA \cite{gribik1979} to tackle problem $D_{\rho}$, which is classified
as a discretization method. For the sake of simplicity, we work with
the first version of the algorithm, while subsequent improvements
are found in Part IV in \cite{goberna1998} and in \cite{betro2004}.
The CCPA has the advantage of having the property of global convergence,
unlike the reduction procedure and almost all methods based on the
primal problem $P$, such as the usual algorithms implemented for
calculating the convex roof \cite{zyczkowski1999,audenaert2001,rothlisberger2009,cao2010,rothlisberger2012}.
Also, it generates a sequence of feasible points that converges to
an optimal value or to a limit point of an optimal value, implying
that a convergent sequence of lower bounds is generated.

In order to successfully employ the CCPA, some conditions need to
be satisfied by $D_{\rho}$. According to \cite{gribik1979}, we
need to restrict the feasible set $F_{\rho}$ to the set $F_{\rho}^{\prime}\equiv F_{\rho}\cap\mathcal{C}$,
where $\mathcal{C}\subset H_{\rho}$ is a compact convex set. Since
$F_{\rho}^{\ast}$ is bounded, there exists $\delta>0$ such that
$F_{\rho}^{\ast}\subseteq B_{\delta}$, where $B_{\delta}\equiv\{X\in H_{\rho}:\|X\|\leq\delta\}$,
with $\|\ldotp\|$ as the operator norm, is a compact convex set.
A valid value of $\delta$ is provided for normalized measures ($0\leq E([\psi])\leq1,\forall[\psi]\in\mathcal{E}_{\rho}$)
by Lemma \ref{bounded} in Appendix \ref{proof:lemma}. Other non-trivial
conditions are the existence of a non-optimal Slater point (a point
that satisfy the Slater's condition) $X$, which is clearly satisfied,
and the continuity of $E$. However, to make the optimization problem
easier to solve, we choose an orthonormal basis $\{Z_{1},\ldots,Z_{r^{2}}\}$
of $H_{\rho}$, use the result of Corollary \ref{bounded_real} and
replace the problem $D_{\rho}$ by the problem $D_{c}$: 
\begin{align}
-\inf_{x\in\mathbb{R}^{r^{2}}}\hspace{0.5cm} & \langle c,x\rangle,\nonumber \\
\text{subject to}\hspace{0.5cm} & \tilde{E}(\psi)+\langle\psi,x\rangle\geq0,\hspace{0.1cm}\forall\psi\in\tilde{\mathcal{E}}_{c},\nonumber \\
 & |x_{m}|\leq r(r-1)\frac{\lambda_{r}}{\lambda_{1}},\hspace{0.1cm}1\leq m\leq r^{2},
\end{align}
where $X=\sum_{k}x_{k}Z_{k}$, $\rho=\sum_{k}c_{k}Z_{k}$, $\psi\equiv(\psi_{1},\ldots,\psi_{r^{2}})$,
$x\equiv(x_{1},\ldots,x_{r^{2}})$, $c\equiv(c_{1},\ldots,c_{r^{2}})$,
$\tilde{\mathcal{E}}_{c}\equiv\{\psi\in\mathbb{R}^{r^{2}}:\sum_{k}\psi_{k}Z_{k}\in\mathcal{E}_{\rho}\}$
and $\tilde{E}(\psi)\equiv E([\psi])$. To simplify the discussion
of the CCPA, we omit the deletion rules in the pseudocode present
in \cite{gribik1979}, as they are not necessary for the convergence
of the algorithm. The pseudocode of the CCPA in \cite{gribik1979},
for a tolerance $\epsilon>0$, is given by following steps: 
\begin{itemize}
\item[Step 0:] Let $\bar{E}$ be strictly greater than $-v(D_{c})$. Let $SD_{c}^{0}$
be the program: 
\begin{align}
\max_{(y,x)\in\mathbb{R}^{r^{2}+1}}\hspace{0.5cm} & y,\nonumber \\
\text{subject to}\hspace{0.5cm} & \langle c,x\rangle+y\|c\|_{2}\leq\bar{E},\nonumber \\
 & |x_{m}|\leq r(r-1)\frac{\lambda_{r}}{\lambda_{1}},\hspace{0.1cm}1\leq m\leq r^{2}.
\end{align}
Choose $w^{(0)}\in\mathbb{R}^{r^{2}}$ such that $|w^{(0)}|\leq r(r-1)\lambda_{r}/\lambda_{1},\hspace{0.1cm}1\leq m\leq r^{2}$.
Let $k=1$. 
\item[Step 1:] Let $(x^{(k)},y^{(k)})\in\mathbb{R}^{r^{2}+1}$ be a solution of
$SD_{c}^{k-1}$. If $|y|<\epsilon$, stop. Otherwise, go to Step 2. 
\item[Step 2:] 
\begin{itemize}
\item[(i)] If $v(D_{\text{aux}}^{k})\geq0$, where $D_{\text{aux}}^{k}:\inf_{\psi\in\tilde{\mathcal{E}}_{c}}\tilde{E}(\psi)+\langle\psi,x^{(k)}\rangle$,
add the constraint $\langle c,x\rangle+y\|c\|_{2}\leq\langle c,x^{(k)}\rangle$
to program $SD_{c}^{k-1}$. Set $w^{(k)}=x^{(k)}$. 
\item[(ii)] Otherwise, add the constraint $\langle\psi^{(k)},x\rangle-y\|\psi^{(k)}\|_{2}\geq-\tilde{E}(\psi^{(k)})$
to program $SD_{c}^{k-1}$. Set $w^{(k)}=w^{(k-1)}$. 
\end{itemize}
In either case, call the resulting program $SD_{c}^{k}$. Set $k=k+1$
and return to step 1. 
\end{itemize}
By Lemma 1 of \cite{gribik1979} and the tolerance $\epsilon$ in
Step 1, the algorithm always terminates. Furthermore, by Theorem 1
of \cite{gribik1979} and dropping the tolerance requirement in Step
1, the sequence of feasible points $\{w^{(k)}\}_{k=0}^{\infty}$ has
limit points and they are optimal, which is the property of global
convergence.

\section{Numerical calculations for $\pi$-tangle and three tangle}

\subsection{Convex roof of $\pi$-tangle}

As an entanglement monotone for pure states \cite{cheng2007}, the
convex roof of the $\pi$-tangle is also an entanglement monotone
\cite{Vidal2000}. Also, it vanishes for any biseparable state, it
is nonzero for $GHZ\backslash W$ class and it is nonzero for at least
some states of the $W\backslash B$ class, which includes any mixed
state with generalized W states \cite{eltschka2008} and biseparable
states in its optimal decomposition. To date, the $\pi$-tangle for
mixed states has been calculated only for the mixture $\rho_{p}$
of W and GHZ states \cite{ma2013}, which has the analytical result
described in Appendix \ref{families}. Here, we numerically reproduce
their analytical result and also calculate it for $\rho'_{p}$.

\subsection{Numerical results}

\label{numerical_results}

We implement the CCPA pseudocode in MATLAB scripts for numerical calculations.
The two main procedures of the algorithm are the linear and nonlinear
optimization problems $SD_{c}^{0}$ and $D_{\text{aux}}^{k}$, respectively,
which were implemented by the linprog function and by the GlobalSearch
object. We use the code to calculate the $\pi$-tangle and the three-tangle
for two families of states: $\rho_{p}$ and $\rho'_{p}$. We also
compare the numerical calculations with the available analytical formulas
in Appendix \ref{families}. The results for the states $\rho_{p}$
are expressed in Fig. \ref{GHZ_W}, which show good agreement with
the analytical curves. With a tolerance $\epsilon=10^{-3}$, we achieve
the results in few minutes using a common notebook {[}\textbf{Processador
e memória?}{]}. For the states $\rho'_{p}$, using $\epsilon=10^{-5}$,
the numerical three-tangle is slightly lower than the exact nonzero
values, according to Fig. \ref{Werner}. This agrees with the fact
that the CCPA gives a lower bound on the convex roof when it finds
a feasible suboptimal solution. As the CCPA has global convergence,
one can generate a larger sequence of feasible points that gives values
closer to the exact one. For $\rho'_{p}$, sequences of no more than
12 feasible solutions were generated for each value of $p$ and each
calculation spent few hours. Since $\rho'_{p}$ is a rank 8 family
of states, this higher computational cost is justified as $\rho_{p}$
has only rank 2. Furthermore, both quantifiers spend a similar amount
of calculation time for each state.

\begin{figure}[h]
\includegraphics[width=1\columnwidth]{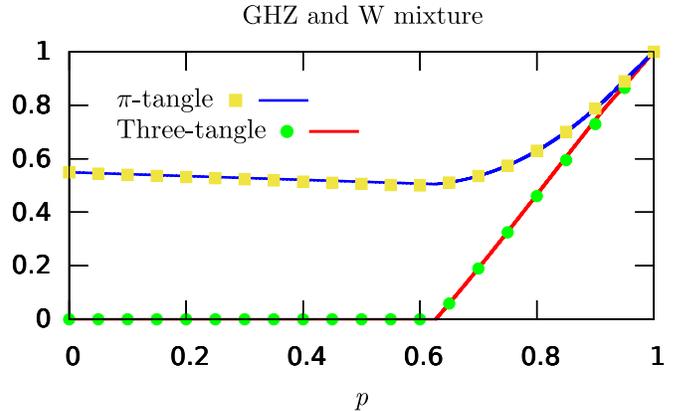} 
\caption{Three-tangle and $\pi$-tangle calculated for states $\rho_{p}$.
Symbols (boxes and circles) and continuous lines represent numerical
and analytical values, respectively.}
\label{GHZ_W} 
\end{figure}

\begin{figure}[h]
\includegraphics[width=1\columnwidth]{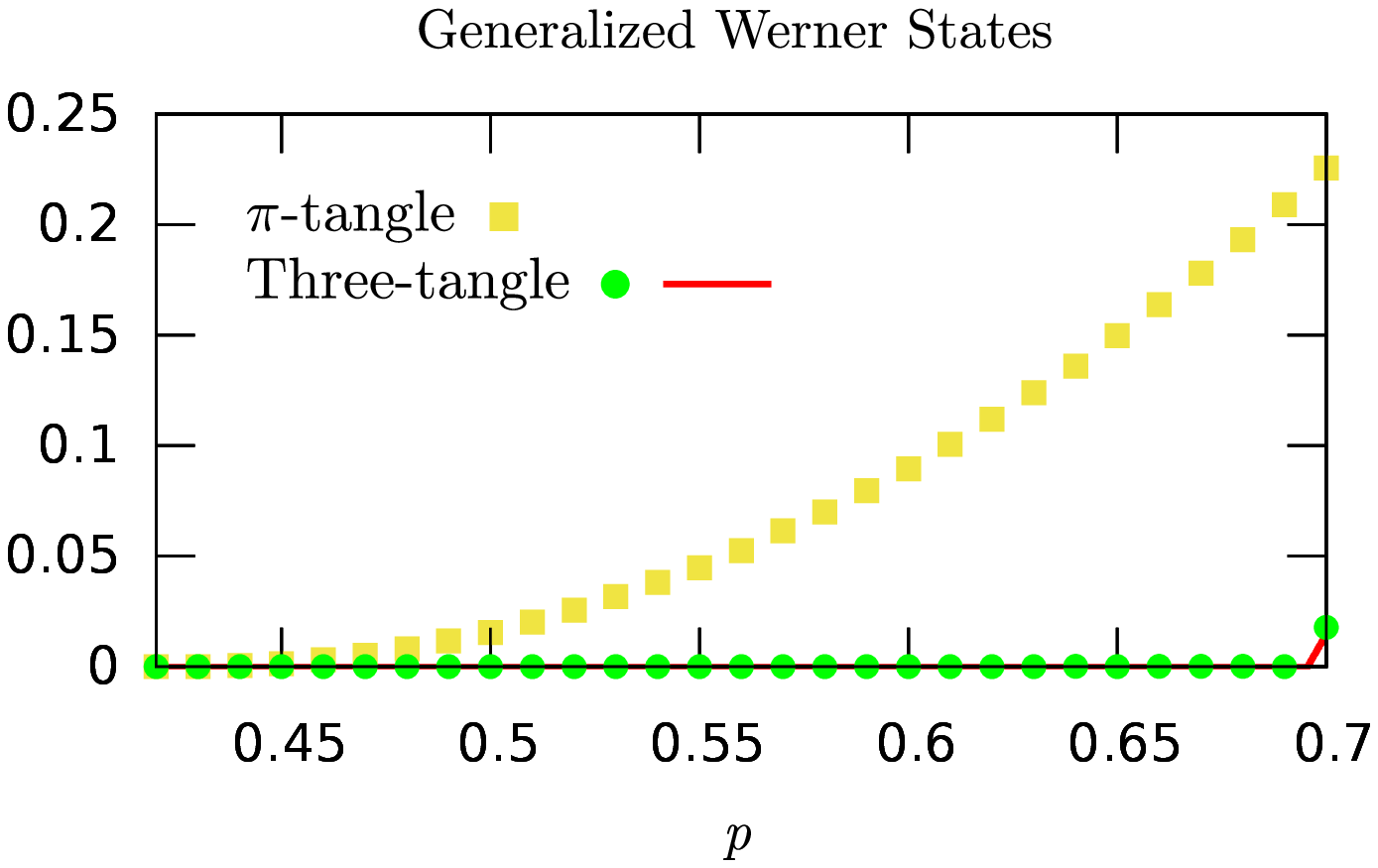}
\caption{Three-tangle and $\pi$-tangle calculated for states $\rho'_{p}$.
Symbols (boxes and circles) and the continuous line represent numerical
and analytical values, respectively.}
\label{Werner} 
\end{figure}

The three-tangle and $\pi$-tangle measures can be used to discriminate
among the classes $B$, $W\backslash B$ and $GHZ\backslash W$ \cite{cheng2007}.
For the family of states $2\rho_{p}$, the analytical results in \cite{lohmayer2006,eltschka2008},
and described in Appendix \ref{families}, show that $\rho_{p}$ belongs
to the $W\backslash B$ class for $p\lesssim0.62685$ and to the $GHZ\backslash W$
class for higher values of $p$. As show by Fig. \ref{GHZ_W}, the
positive values of the three-tangle indicate the $GHZ\backslash W$
class, whereas the positive values of the $\pi$-tangle in the region
where the three-tangle is zero show that the state belongs to the
$W\backslash B$ class. The graph around the class transition point,
calculated with a tolerance $\epsilon=10^{-5}$ and depicted in Fig.
\ref{GHZ_W_zoom}, shows that the numerical result is in a good agreement
with the analytical one. In the case of the family $\rho'_{p}$, it
belongs to the $B$ class for $p\leq p_{B}\equiv3/7\approx0.42857$,
to the $W\backslash B$ class for $3/7<p\leq p_{W}\approx0.69554$
and to the $GHZ\backslash W$ class for $p>p_{W}$ \cite{guhne2010,siewert2012-2}.
The plot in Fig. \ref{Werner_zoom} shows that the class transition
in $p_{B}$ occurs between $p=0.43$ and $p=0.44$, which is only
slightly higher than $p_{B}$, which is expected since the algorithm
gives a close lower bound to the optimal value. In addition, the numerical
values in the graph show the transition between classes $W\backslash B$
and $GHZ\backslash W$.

\begin{figure}[tb]
\includegraphics[width=1\columnwidth]{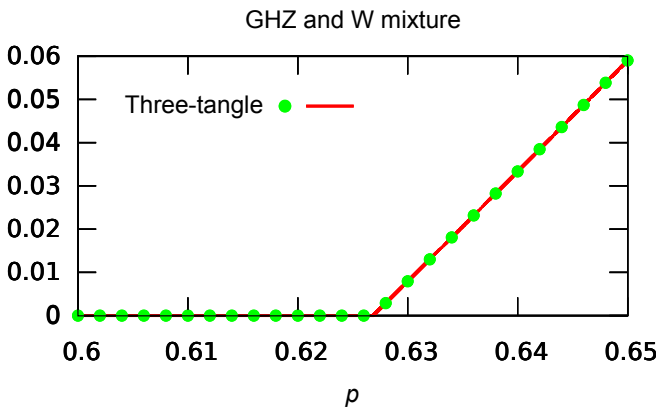} 
\caption{Three-tangle calculated for states $\rho_{p}$. Symbols (circles)
and the continuous line represent numerical and analytical values,
respectively.}
\label{GHZ_W_zoom} 
\end{figure}

\begin{figure}[tb]
\includegraphics[width=1\columnwidth]{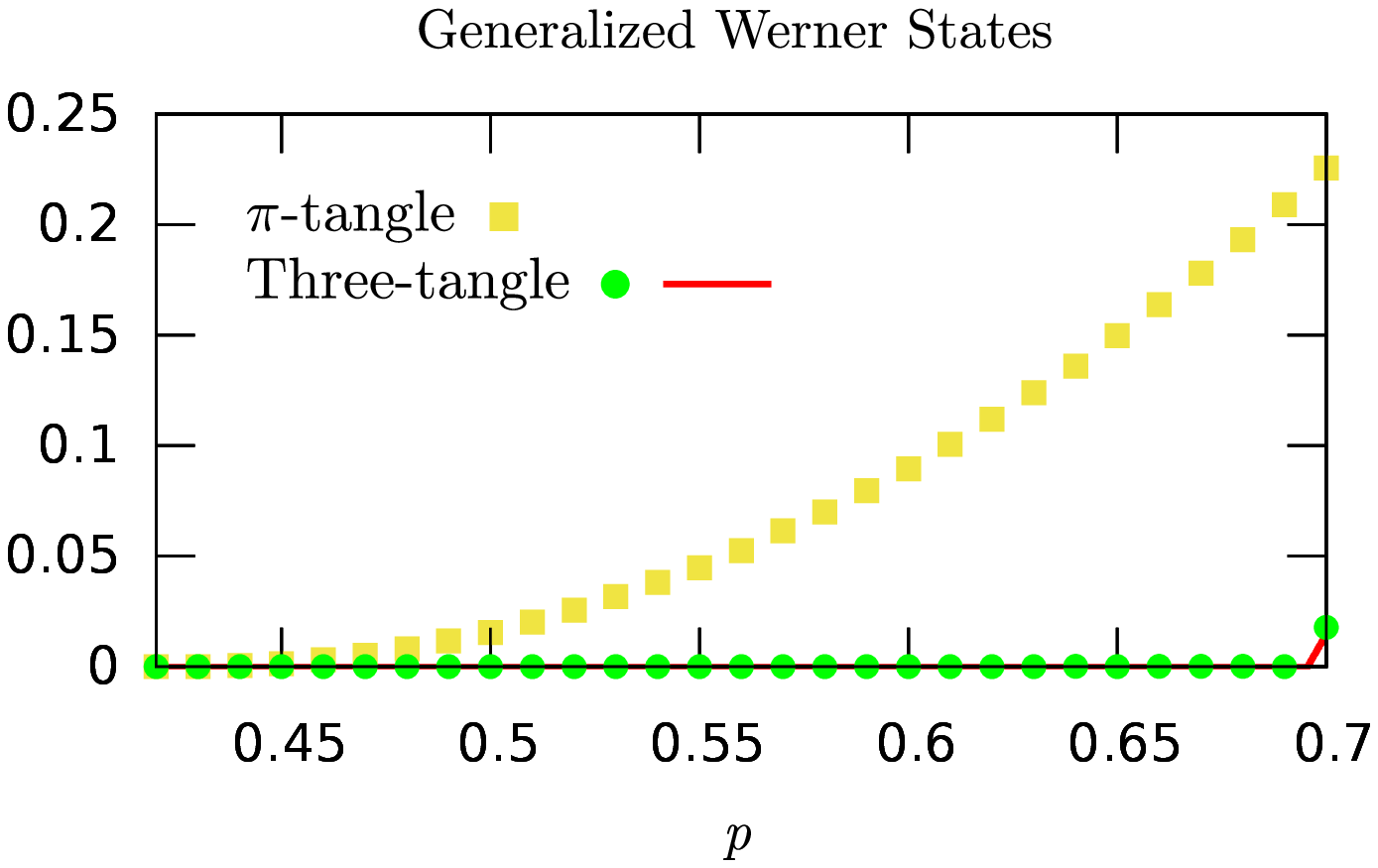}
\caption{Three-tangle and $\pi$-tangle calculated for states $\rho'_{p}$.
Symbols (boxes and circles) and the continuous line represent numerical
and analytical values, respectively.}
\label{Werner_zoom} 
\end{figure}


\section{Conclusion}

We explored the theory of LSIP to derive properties of the dual problem
of the convex roof procedure that gives entanglement monotones for
mixed states from pure state measures. We showed that the absence
of the duality gap between primal and dual problems occurs in very
general conditions. In addition, we proved that the set of optimal
points is non-empty and bounded and we derived bounds on the coefficients
of optimal solutions. For numerical calculations, we wrote the dual
problem in a suitable LSIP form and described the pseudocode of an
CCPA designed for this type of optimization. To check the performance
of the algorithm, we calculated two measures of genuine three-qubit
entanglement, three-tangle and $\pi$-tangle, for the mixture of GHZ
and W states and for the generalized Werner states, a full rank family
of states. We compared the numerical results with the available analytical
values and verified that the CCPA results are very close the exact
ones for the lower rank family of states, while providing lower bounds
for the high rank ones. As the algorithm gives lower bounds on the
amount of entanglement for suboptimal feasible points and global convergence,
the results are in agreement with the expected behavior. Furthermore,
we used the difference between the two measures to distinguish $GHZ\backslash W$
and $W$ classes, in agreement with the entanglement classification
of these states in the literature.

We believe that our work gives a good alternative to the convex roof
calculation of mixed states entanglement, especially when close lower
bounds are required. The CCPA has very general applicability, working
with discontinuous measures and multipartite states with any finite
rank. For future works, we expect to apply other LSIP algorithms to
the convex roof problem, with the necessary modifications and improvements. 
\begin{acknowledgments}
The authors acknowledge the financial support of the Brazilian agencies
CNPq (\#312723/2018-0, \#306065/2019-3 \& \#425718/2018-2), CAPES
(PROCAD - 2013), and FAPEG (PRONEM \#201710267000540, PRONEX \#201710267000503).
This work was also performed as part of the Brazilian National Institute
of Science and Technology (INCT) for Quantum Information (\#465469/2014-0). 
\end{acknowledgments}

\appendix

\section{\label{families} Thee-tangle and $\pi$-tangle for families of states}

Here, we show the analytical expressions for the three-tangle and
$\pi$-tangle for the families of states $\rho_{p}$ and $\rho'_{p}$
available in the literature. First, we show the formulas for the three-tangle
quantifier applied to the mixture of GHZ and W states: $\rho_{p}$.
Set $s\equiv8\sqrt{6}/9$, $p_{0}\equiv s^{2/3}/(1+s^{2/3})$, $p_{1}\equiv1/2+1/\big(2\sqrt{1+s^{2}}\big)$.
The three-tangle of $\rho_{p}$ is given by \cite{lohmayer2006,eltschka2008}
\begin{equation}
\tau(\rho_{p})=\Bigg\{\begin{matrix}\hspace{-1.4cm}0 & \hspace{-0.3cm}\text{for \ensuremath{p\leq p_{0}},}\\
\hspace{-0.5cm}\tau_{3}(p,0) & \hspace{0.5cm}\text{for \ensuremath{p_{0}<p\leq p_{1}},}\\
\hspace{0.1cm}\tau_{3}^{\text{conv}}(p,p_{1}) & \hspace{-0.3cm}\text{for \ensuremath{p>p_{1}},}
\end{matrix}
\end{equation}
where $\tau_{3}(p,0)\equiv\big\arrowvert p^{2}-16\sqrt{p(1-p)^{3}}/3\sqrt{6}\big\arrowvert$
and $\tau_{3}^{\text{conv}}(p,p_{1})\equiv\big[p-p_{1}+(1-p)\big(p_{1}^{2}-s\sqrt{p_{1}(1-p_{1})^{3}}\big)\big]/(1-p_{1})$.

The family of states $\rho'_{p}$, as the parameter $p$ ranges from
0 to 1, goes through all three-qubit entanglement classes: $S,B\backslash S,W\backslash B$
and $GHZ\backslash W$ \cite{siewert2012-2}, where $S$ is the class
of separable states. The value $p_{W}$ of $p$ that separates the
classes $W$ and $GHZ\backslash W$ is $p_{W}\approx0.6955427$. The
three-tangle of $\rho'_{p}$ is then given by \cite{siewert2012}
\begin{equation}
\tau(\rho'_{p})=\bigg\{\begin{matrix}\hspace{-0.5cm}0 & \text{for \ensuremath{p\leq p_{W}},}\\
\hspace{0.1cm}\frac{p-p_{W}}{1-p_{W}} & \hspace{0.65cm}\text{for \ensuremath{p_{W}<p\leq1}.}
\end{matrix}
\end{equation}

The last available analytical result is the $\pi$-tangle of the states
$\rho_{p}$, which is given by \cite{ma2013}: 
\begin{equation}
\pi(\rho_{p})=\Bigg\{\begin{matrix}\pi^{(1)}(\rho_{p}) & \text{for \ensuremath{0\leq p\leq p_{0}},}\\
\pi^{(2)}(\rho_{p}) & \text{for \ensuremath{p_{0}<p\leq p_{1}},}\\
\pi^{(3)}(\rho_{p}) & \text{for \ensuremath{p_{1}<p\leq1},}
\end{matrix}
\end{equation}
where $\pi^{(1)}(\rho_{p})\equiv\big\{4(\sqrt{5}-1)(p_{0}-p)+p\big[5p_{0}^{2}-4p_{0}+8-18\big(\sum_{i=1}^{4}|\lambda_{i}(p_{0})|-1\big)^{2}\big]\big\}/9p_{0}$,
$\pi^{(2)}(\rho_{p})\equiv\big[5p^{2}-4p+8-18\big(\sum_{i=1}^{4}|\lambda_{i}(p)|-1\big)^{2}\big]/9$
and $\pi^{(3)}(\rho_{p})\equiv\big\{ p-p_{1}+(1-p)\big[5p_{1}^{2}-4p_{1}+8-18\big(\sum_{i=1}^{4}|\lambda_{i}(p_{1})|-1\big)^{2}\big]/9\big\}/(1-p_{1})$.
For a fixed value of $p$, each $\lambda_{i}(p)$, for $i\in\{1,\ldots,4\}$,
is a solution of the following equation: 
\begin{align*}
 & \lambda^{4}-\lambda^{3}+\left(\frac{5}{36}p^{2}-\frac{p}{9}+\frac{2}{9}\right)\lambda^{2}+\bigg[\frac{(p(1-p))^{3/2}}{3\sqrt{6}}\\
 & -\frac{7}{27}p^{3}+\frac{7}{18}p^{2}-\frac{p}{6}+\frac{1}{27}\bigg]\lambda+\bigg[-\frac{p(p(1-p))^{3/2}}{6\sqrt{6}}\\
 & -\frac{41}{648}p^{4}+\frac{149}{648}p^{3}-\frac{13}{54}p^{2}+\frac{7}{81}p-\frac{1}{81}\bigg]=0.
\end{align*}

\section{\label{proof:lemma} Bounding the feasible set}

\begin{lemma} \label{bounded} If $0\leq E([\psi])\leq1,\forall[\psi]\in\mathcal{E}_{\rho},$
and $\delta\equiv(r-1)\lambda_{r}/\lambda_{1}$, where $r=\rank(\rho)$,
$\lambda_{1}$ and $\lambda_{r}$ are the lowest and highest eigenvalues
of $\rho$, respectively, then $F_{\rho}^{\ast}\subseteq B_{\delta}$.
\end{lemma}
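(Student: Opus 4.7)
The plan is to show every optimal dual solution $X^\ast \in F_\rho^\ast$ has all its eigenvalues (as an operator on $\mathcal{H}_\rho$) confined to the interval $[-1,\delta]$, so that $\|X^\ast\| \le \delta$. This requires pairing two pieces of information: a lower bound on the eigenvalues of $X^\ast$ coming from dual feasibility, and an upper bound coming from the value of the objective at optimum (which in turn uses the no-duality-gap result established earlier).

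First I would use normalization to bound the optimal value. Since the eigendecomposition $\rho=\sum_{k=1}^r \lambda_k[\phi_k]$ is itself feasible for the primal problem $P$, and every $E([\phi_k])\le 1$, one gets $v(P)\le\sum_k\lambda_k = 1$; and $v(P)\ge 0$ since $E\ge 0$. By the absence of a duality gap proved earlier (Slater point from positive-definite $X$, Farkas--Minkowski system $\sigma_\rho$), $v(D_\rho)=v(P)\in[0,1]$, which gives $\tr(\rho X^\ast)\in[-1,0]$ for every optimal $X^\ast$.

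Next I would exploit dual feasibility. For any unit vector $|\psi\rangle\in\mathcal{H}_\rho$, $\tr([\psi]X^\ast)\ge -E([\psi])\ge -1$, so the smallest eigenvalue $\mu_{\min}$ of $X^\ast$ on $\mathcal{H}_\rho$ satisfies $\mu_{\min}\ge -1$. To bound the largest eigenvalue $\mu_{\max}$ from above, I would diagonalize $X^\ast=\sum_{j=1}^r \mu_j[\xi_j]$ and write
\[
0 \;\ge\; \tr(\rho X^\ast) \;=\; \sum_{j=1}^r \mu_j\,q_j, \qquad q_j := \langle\xi_j|\rho|\xi_j\rangle.
\]
Let $j^\ast$ index the largest eigenvalue. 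Using $\mu_j\ge -1$ on every other term,
\[
0 \;\ge\; q_{j^\ast}\mu_{\max} - \sum_{j\ne j^\ast} q_j \;=\; q_{j^\ast}\mu_{\max} - (1-q_{j^\ast}),
\]
so $\mu_{\max}\le (1-q_{j^\ast})/q_{j^\ast}$. Since $|\xi_{j^\ast}\rangle\in\mathcal{H}_\rho$, $q_{j^\ast}\ge\lambda_1$, and since $1-\lambda_1=\sum_{k\ne 1}\lambda_k\le (r-1)\lambda_r$, this yields $\mu_{\max}\le (r-1)\lambda_r/\lambda_1=\delta$. Combined with $\delta\ge 1$ (clear for $r\ge 2$ since $(r-1)\lambda_r\ge\lambda_1$), we have $\|X^\ast\|=\max(|\mu_{\max}|,|\mu_{\min}|)\le\delta$, giving $F_\rho^\ast\subseteq B_\delta$.

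The main subtlety, rather than an obstacle, is making sure the two sides of the argument use the right pieces: the lower bound on eigenvalues comes purely from dual feasibility plus $E\le 1$, while the upper bound on $\mu_{\max}$ requires both the optimality-based bound $\tr(\rho X^\ast)\le 0$ and the lower eigenvalue bound already established, closed off by the elementary spectral estimate $q_{j^\ast}\ge\lambda_1$. The degenerate case $r=1$ is vacuous (the decomposition is unique and the dual problem trivializes), so I would state the argument for $r\ge 2$.
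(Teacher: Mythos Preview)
Your argument is correct and follows essentially the same route as the paper's own proof: feasibility together with $E\le 1$ forces the smallest eigenvalue of $X^\ast$ to be at least $-1$, absence of a duality gap together with $E^\cup(\rho)\ge 0$ gives $\tr(\rho X^\ast)\le 0$, and combining these in the eigenbasis of $X^\ast$ via the diagonal entries $q_j=\langle\xi_j|\rho|\xi_j\rangle\in[\lambda_1,\lambda_r]$ yields $\mu_{\max}\le(r-1)\lambda_r/\lambda_1$. Your write-up is slightly more explicit than the paper's (you spell out $\delta\ge 1$ for $r\ge 2$ and the degenerate case $r=1$), but the ideas and the key inequalities are identical.
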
 \begin{proof} Let $x_{1}\leq\ldots\leq x_{r}$ be the
eigenvalues of $X\in H_{\rho}$. By the constraint $E([\psi])+\tr([\psi]X)\geq0,\forall[\psi]\in\mathcal{E}_{\rho}$,
of the problem $D_{\rho}$ and the min-max theorem, we have that $x_{1}\geq-1$
is a necessary condition for the feasibility of $X$. Let $\{|x_{1}\rangle,\ldots|x_{r}\rangle\}$
be an orthonormal basis with eigenvectors of $X$ and $\rho=\sum_{k,l}\lambda_{k,l}^{\prime}|x_{k}\rangle\langle x_{l}|$.
As $x_{1}\geq-1$, $0\leq E^{\cup}(\rho)\leq1$ and by the fact that
there is no duality gap between $D_{\rho}$ and $P$, if $X\in F_{\rho}^{\ast}$
then 
\begin{equation}
\tr(\rho X)=\sum_{k}\lambda_{k,k}^{\prime}x_{k}\leq0\Rightarrow x_{r}\leq(r-1)\frac{\lambda_{r}}{\lambda_{1}}.\label{eq_lemma}
\end{equation}
Equation \ref{eq_lemma} implies that $\|X\|=\sup\{\|X|\psi\rangle\|_{2}:\||\psi\rangle\|_{2}=1\}=\max\{|x_{1}|,|x_{r}|\}\leq(r-1)\lambda_{r}/\lambda_{1}$.
Thus, we conclude that $F_{\rho}^{\ast}\subseteq B_{\delta}$ for
$\delta\equiv(r-1)\lambda_{r}/\lambda_{1}$. \end{proof}

\begin{corol} \label{bounded_real} Let $\{Z_{1},\ldots,Z_{r^{2}}\}$
be an orthonormal basis of $H_{\rho}$, $r=\rank(\rho)$ and $\lambda_{1}\leq\ldots\leq\lambda_{r}$
the eigenvalues of $\rho$. If $X\in F_{\rho}^{\ast}$ then $x_{m}^{\prime}\equiv|\tr(A_{m}X)|\leq r(r-1)\lambda_{r}/\lambda_{1}$.
\end{corol}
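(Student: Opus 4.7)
My plan is to derive Corollary~\ref{bounded_real} as a straightforward consequence of Lemma~\ref{bounded} combined with standard comparisons among the operator, Hilbert--Schmidt and trace norms on the $r$-dimensional operator space $H_{\rho}$. The essential observation is that the orthonormality of $\{Z_1,\ldots,Z_{r^{2}}\}$ is taken with respect to the Hilbert--Schmidt inner product $\langle A,B\rangle_{\mathrm{HS}}=\tr(AB)$ --- this is the natural convention under which the expansion $\rho=\sum_k c_k Z_k$ used to cast $D_{\rho}$ as $D_c$ is computed by $c_k=\tr(Z_k\rho)$, and it immediately forces $\|Z_m\|_2=1$ for every basis element.

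First, I would invoke Lemma~\ref{bounded} to conclude that any $X\in F_{\rho}^{\ast}$ satisfies $\|X\|\leq (r-1)\lambda_r/\lambda_1$ in the operator norm. Second, I would bound $|\tr(Z_m X)|$ via the trace/operator-norm duality $|\tr(AB)|\leq\|A\|_1\|B\|$, giving $|\tr(Z_m X)|\leq\|Z_m\|_1\,\|X\|$. The factor $\|Z_m\|_1$ is then controlled by two elementary inequalities valid on an $r$-dimensional Hermitian space: $\|Z_m\|\leq\|Z_m\|_2=1$ (the maximum singular value is no larger than their Euclidean norm) and $\|Z_m\|_1\leq r\|Z_m\|$ (the sum of $r$ singular values is at most $r$ times their maximum). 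Composing these estimates yields $|\tr(Z_m X)|\leq r(r-1)\lambda_r/\lambda_1$, which is precisely the claimed bound, once the operator $A_m$ of the statement is identified with $Z_m$.

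The argument is essentially routine, so there is no real obstacle. The only mild points that deserve checking are: (i) confirming the Hilbert--Schmidt orthonormality convention, so that $\|Z_m\|_2=1$; (ii) ensuring the norm comparisons are applied inside $H_{\rho}$, which has complex dimension $r$, rather than in the ambient $n$-dimensional Hermitian space, since otherwise the factor $r$ in the bound would have to be replaced by the ambient dimension; and (iii) noting that a slightly tighter constant $\sqrt{r}\,(r-1)\lambda_r/\lambda_1$ would follow from a single application of Cauchy--Schwarz in Hilbert--Schmidt, $|\tr(Z_m X)|\leq\|Z_m\|_2\|X\|_2\leq\sqrt{r}\,\|X\|$, should a sharper estimate ever be required downstream.
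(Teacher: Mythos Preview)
Your proposal is correct and follows essentially the same approach as the paper: both use Lemma~\ref{bounded} for the operator-norm bound on $X$, the Hilbert--Schmidt normalization $\|Z_m\|_2=1$, and elementary norm comparisons in the $r$-dimensional space $H_{\rho}$. The only cosmetic difference is which factor absorbs the trace norm in the H\"older-type estimate: the paper expands $X$ spectrally and bounds $|\tr(Z_m X)|\leq\sum_k|x_k|\,|\langle x_k|Z_m|x_k\rangle|\leq\|X\|_1\leq r\|X\|$, whereas you use $|\tr(Z_m X)|\leq\|Z_m\|_1\|X\|\leq r\|Z_m\|\,\|X\|$; both routes collapse to the same constant.
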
 \begin{proof} Let $\{|x_{1}\rangle,\ldots|x_{r}\rangle\}$
be an orthonormal basis with eigenvectors of $X$ and $X=\sum_{k}x_{k}[x_{k}]$.
By Lemma \ref{bounded}, $|\tr(Z_{m}X)|\leq\sum_{k}|x_{k}||\tr(Z_{m}[x_{k}])|\leq\sum_{k}|x_{k}|\leq r(r-1)\lambda_{r}/\lambda_{1}$.
\end{proof}

\bibliographystyle{apsrev4-1}
\bibliography{Refs}

\end{document}